\newtheorem{thm}{Theorem}
\theoremstyle{definition}
\newcommand{\R}{\mathbb{R}}
\begin{document}
\title{A possible mathematics for the unification of quantum mechanics and general relativity}
\author{A. \surname{Kryukov}} 
\affiliation{Department of Mathematics, University of Wisconsin Colleges, 780 Regent Street, Madison, Wisconsin 53708, USA}
\begin{abstract}

The paper summarizes and generalizes a recently proposed mathematical framework that unifies the standard formalisms of special relativity and quantum mechanics. The framework is based on Hilbert spaces $H$ of functions of four space-time variables ${\bf x}, t$, furnished with an additional indefinite inner product invariant under Poincar{\'e} transformations, and isomorphisms of these spaces that preserve the indefinite metric. The indefinite metric is responsible for breaking the symmetry between space and time variables and for selecting a family of Hilbert subspaces that are preserved under Galileo transformations. Within these subspaces the usual quantum mechanics with Shr{\"o}dinger evolution and $t$ as the evolution parameter is derived. Simultaneously, the Minkowski space-time is isometrically embedded into $H$, Poincar{\'e} transformations have unique extensions to isomorphisms of $H$ and the embedding commutes with Poincar{\'e} transformations. The main new result is a proof that the framework accommodates arbitrary pseudo-Riemannian space-times furnished with the action of the diffeomorphism group.   
\end{abstract}


\maketitle


\section{Preliminaries}

Resolving the tension between quantum theory and relativity is one of the most important problems of modern physics. In this paper the issue is presented mathematically and the goal is to find a formalism that unifies the mathematics used in both theories. More specifically, the challenge is to find a meaningful way of ``encoding'' the differential geometry of finite dimensional pseudo-Riemannian manifolds into the theory of infinite dimensional Hilbert spaces.
Provided such a ``unified'' formalism exists, it may be useful in particular, in studying the issues of compatibility of quantum theory and relativity and the problem of emergence of the classical world in quantum theory. 

The first thing that comes to mind is that the theory of representations of groups provides a partial answer to the challenge. Indeed, it allows one to represent symmetries of a physical system in terms of linear transformations on the Hilbert space of states of the system. For instance, symmetries of Minkowski space-time can be represented by unitary transformations on the Hilbert space of states of a relativistic system. However, despite the undeniable significance of representations in physics there is more to the problem than representations of groups can provide. For instance, if the symmetry group (i.e., group of isometries) of a Riemannian manifold is trivial, representations of the group do not contain any information about the manifold. 

On the other hand, there is a link between the topology of a space and algebra of continuous functions on the space that may be useful to tackle the problem. Namely, the
celebrated Gel'fand-Kolmogorov theorem Ref.\cite{Gel} states that an arbitrary compact Hausdorff space $X$ is homeomorphic to the space of all evaluation homomorphisms (delta functions) in the infinite-dimensional vector space dual to the Banach algebra $C(X)$ of continuous functions on $X$. In other words, $X$ can be identified with the set of all delta functions in the space dual to $C(X)$. However, $C(X)$ is not a Hilbert space, the topology is poorer than the Riemannian structure and the condition of compactness is too restrictive. In addition, the fact that elements of $C(X)$ are functions on $X$ makes it difficult to use $C(X)$ independently of $X$. This is a problem if one has in mind the goal of deriving the classical from the quantum theory.

In the case of a single particle system in $\R^{3}$ the most obvious physically meaningful embedding of $\R^{3}$ into the space of states of a particle is by identifying a point ${\bf a}$ in $\R^{3}$ with the state $\delta^{3}_{\bf a}({\bf x})=\delta^{3}({\bf x}-{\bf a})$ of the particle found at ${\bf a}$. This embedding was usefully explored in Refs.\cite{Kryukov},\cite{Kryukov3} to develop a geometric approach to quantum mechanics. Results of Ref.\cite{KryukovJMP} prove that an embedding of this kind is also ideally suited for addressing the issues of the unification of quantum mechanics and special relativity. In the present paper the main results of Ref.\cite{KryukovJMP} are summarized, updated and extended to include curved space-time manifolds of general relativity.

\section{Quantum mechanics and special relativity}

Delta functions are not in the common space $L_{2}(\R^{3})$ of Lebesgue square-integrable functions on $\R^{3}$. So to use this correspondence one first needs to find a Hilbert space of functions that contains delta functions and that is ``approximately equal'' to the space $L_{2}(\R^{3})$. The following theorem takes care of this task (see Refs.\cite{Kryukov3},\cite{KryukovJMP} for details on results in this section and Refs.\cite{Gel-Kos},\cite{Gross} for related original publications on rigged Hilbert spaces).
\begin{thm}
\label{1}
The Hilbert space ${\bf H}$ obtained by completing the space $L_{2}(\R^{3})$ in the metric defined by the inner product 
\begin{equation}
\label{inner}
(\varphi, \psi)_{{\bf H}}=\left(\frac{L}{\sqrt {2\pi}}\right)^{3}\int e^{-\frac{L^{2}}{2}({\bf x}-{\bf y})^{2}}\varphi({\bf x}){\overline \psi({\bf y})} d^{3}{\bf x}d^{3}{\bf y}
\end{equation}
with a positive constant $L$ contains delta functions and their derivatives. Furthermore, for a sufficiently large $L$ the ${\bf H}$ and $L_{2}$-norms of any given function $f\in L_{2}(\R^{3})$ are arbitrarily close to each other.
\end{thm}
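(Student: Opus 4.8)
The plan is to diagonalize the $\mathbf{H}$-inner product by passing to the Fourier transform, which converts the Gaussian convolution kernel into a multiplication operator. First I would observe that the kernel $(L/\sqrt{2\pi})^{3}e^{-\frac{L^{2}}{2}(\mathbf{x}-\mathbf{y})^{2}}$ depends only on $\mathbf{x}-\mathbf{y}$, so that $(\varphi,\psi)_{\mathbf{H}}$ is the $L_{2}$-pairing of $\varphi$ against the convolution $G_{L}*\overline{\psi}$, where $G_{L}(\mathbf{x})=(L/\sqrt{2\pi})^{3}e^{-\frac{L^{2}}{2}|\mathbf{x}|^{2}}$ is the unit-mass Gaussian. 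Using the convolution theorem together with the Plancherel identity—first on Schwartz functions, where every manipulation is justified, and then extending by density—I expect to arrive at the clean formula
\[
(\varphi,\psi)_{\mathbf{H}} = \int e^{-\frac{|\mathbf{p}|^{2}}{2L^{2}}}\,\hat{\varphi}(\mathbf{p})\,\overline{\hat{\psi}(\mathbf{p})}\, d^{3}\mathbf{p},
\]
since the Fourier transform of $G_{L}$ is $(2\pi)^{-3/2}e^{-|\mathbf{p}|^{2}/(2L^{2})}$. In other words, the $\mathbf{H}$-norm is nothing but a Gaussian-weighted $L_{2}$-norm in momentum space.

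With this formula in hand I would next identify the completion concretely. Because the Fourier transform is an isometry of $L_{2}(\R^{3})$ onto itself and the weight $e^{-|\mathbf{p}|^{2}/(2L^{2})}$ is bounded, strictly positive, and bounded away from zero on compact sets, completing $L_{2}$ in the $\mathbf{H}$-norm is the same as completing $L_{2}(d^{3}\mathbf{p})$ in the weighted norm. That completion is exactly $L_{2}\bigl(\R^{3},\,e^{-|\mathbf{p}|^{2}/(2L^{2})}\,d^{3}\mathbf{p}\bigr)$, because the compactly supported (or Schwartz) functions are dense in the weighted space and already lie in $L_{2}(d^{3}\mathbf{p})$. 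Via the inverse Fourier transform this realizes $\mathbf{H}$ as a concrete space of tempered distributions, so that a statement such as $\delta^{3}_{\mathbf{a}}\in\mathbf{H}$ reduces to checking that the corresponding momentum-space function has finite weighted norm.

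For the first assertion I would then simply compute: $\widehat{\delta^{3}_{\mathbf{a}}}(\mathbf{p})=(2\pi)^{-3/2}e^{-i\mathbf{p}\cdot\mathbf{a}}$ has constant modulus, so its weighted norm squared equals $(2\pi)^{-3}\int e^{-|\mathbf{p}|^{2}/(2L^{2})}\,d^{3}\mathbf{p}=L^{3}(2\pi)^{-3/2}<\infty$; and any derivative $\partial^{\alpha}\delta^{3}_{\mathbf{a}}$ has momentum-space representative proportional to $\mathbf{p}^{\alpha}e^{-i\mathbf{p}\cdot\mathbf{a}}$, whose weighted norm is finite since the Gaussian weight decays faster than any polynomial grows. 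For the second assertion, fixing $f\in L_{2}$ I would write $\|f\|_{L_{2}}^{2}-\|f\|_{\mathbf{H}}^{2}=\int\bigl(1-e^{-|\mathbf{p}|^{2}/(2L^{2})}\bigr)|\hat{f}(\mathbf{p})|^{2}\,d^{3}\mathbf{p}$ and let $L\to\infty$: the integrand is dominated by $|\hat{f}|^{2}\in L_{1}$ and tends pointwise to $0$, so dominated convergence yields $\|f\|_{\mathbf{H}}\to\|f\|_{L_{2}}$, which is exactly the claimed approximate equality for large $L$.

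The routine parts are the Gaussian Fourier transform and these two convergence estimates. The step deserving the most care—and the one I regard as the main obstacle—is the rigorous passage from the defining double integral to the momentum-space formula together with an honest identification of the abstract completion $\mathbf{H}$ with the weighted $L_{2}$ space. The double integral need not converge absolutely for generic $L_{2}$ inputs, so I would establish the formula on the dense class of Schwartz functions, where Fubini and Plancherel apply cleanly, and then extend by continuity, taking care that ``$\delta^{3}_{\mathbf{a}}\in\mathbf{H}$'' refers to a genuine element of the concretely realized completion and not merely to a formally finite expression.
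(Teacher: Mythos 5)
Your proof is correct, but a direct comparison is only partly possible: the paper proves nothing here, deferring Theorem \ref{1} to Refs.\ \cite{Kryukov3},\cite{KryukovJMP}, where the standard route is to factorize the Gaussian kernel in position space as the convolution square of another Gaussian and realize ${\bf H}$ as the image of $L_{2}(\R^{3})$ under the corresponding smoothing operator; delta functions then map to Gaussians, which are manifestly square-integrable. Your Fourier diagonalization is the unitarily equivalent momentum-space version of the same idea (the convolution square root becomes the multiplier $e^{-|{\bf p}|^{2}/(4L^{2})}$), and it buys you two things the position-space argument obtains less directly: a completely explicit identification of the completion as the weighted space $L_{2}\bigl(\R^{3},e^{-|{\bf p}|^{2}/(2L^{2})}d^{3}{\bf p}\bigr)$, and a one-line dominated-convergence proof of the large-$L$ norm approximation, which in \cite{Kryukov3} is handled by estimating the smoothing operator against the identity. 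All your computations check out; as a sanity check, plugging $\delta^{3}_{\bf a}$ formally into the kernel (\ref{inner}) gives $\|\delta^{3}_{\bf a}\|^{2}_{{\bf H}}=(L/\sqrt{2\pi})^{3}$, matching your momentum-space value $L^{3}(2\pi)^{-3/2}$. One small remark: the caution you flag as the main obstacle is not actually needed, since the defining double integral converges absolutely for all $\varphi,\psi\in L_{2}$ by Cauchy--Schwarz and Young's inequality ($\iint G_{L}({\bf x}-{\bf y})|\varphi({\bf x})||\psi({\bf y})|\,d^{3}{\bf x}\,d^{3}{\bf y}\le\|G_{L}\|_{1}\|\varphi\|_{2}\|\psi\|_{2}$), so Fubini and the multiplier formula hold outright on $L_{2}$ without the Schwartz-density detour; the density argument is still the right tool, however, for identifying the abstract completion with the weighted space, and your verification that truncations are dense there is exactly the point that needed checking.
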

The map $\omega:a \longrightarrow \delta^{3}_{\bf a}$ is one-to-one, so the set $\R^{3}$ can be identified with the set of all delta functions in ${\bf H}$. Moreover, the induced manifold structure and the metric on the image $M_{3}=\omega(\R^{3})\subset {\bf H}$ are those of the Euclidean space $\R^{3}$. In other words, 
\begin{thm}
\label{1a}
The map $\omega:{\bf a} \longrightarrow \delta^{3}_{\bf a}$ is an isometric embedding of the space $\R^{3}$ with the Euclidean metric into the space ${\bf H}$ defined in theorem \ref{1}. 
\end{thm}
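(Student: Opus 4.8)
The plan is to exploit the fact that $\mathbf{H}$ is a reproducing-kernel Hilbert space whose kernel is precisely the Gaussian appearing in (\ref{inner}). The first step is to record the key evaluation
\begin{equation}
(\delta^3_{\mathbf{a}}, \delta^3_{\mathbf{b}})_{\mathbf{H}} = \left(\frac{L}{\sqrt{2\pi}}\right)^3 e^{-\frac{L^2}{2}(\mathbf{a}-\mathbf{b})^2} = K(\mathbf{a},\mathbf{b}),
\end{equation}
which follows immediately by substituting $\varphi=\delta^3_{\mathbf{a}}$, $\psi=\delta^3_{\mathbf{b}}$ into (\ref{inner}) and carrying out the two trivial delta integrations; Theorem \ref{1} guarantees that both sides are well defined since $\delta^3_{\mathbf{a}},\delta^3_{\mathbf{b}}\in\mathbf{H}$. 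Injectivity of $\omega$ is then immediate, and one reads off the chordal distance $\|\delta^3_{\mathbf{a}}-\delta^3_{\mathbf{b}}\|^2_{\mathbf{H}}=2C\bigl(1-e^{-\frac{L^2}{2}(\mathbf{a}-\mathbf{b})^2}\bigr)$ with $C=(L/\sqrt{2\pi})^3$, a strictly increasing function of the Euclidean distance $|\mathbf{a}-\mathbf{b}|$. Since this function is a continuous bijection onto its range with continuous inverse, $\omega$ is a homeomorphism onto its image, so $M_3$ is a topological (indeed smooth) submanifold.

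To identify the induced Riemannian metric I would next compute the coordinate tangent vectors to $M_3$. Differentiating $\delta^3_{\mathbf{a}}(\mathbf{x})=\delta^3(\mathbf{x}-\mathbf{a})$ with respect to the parameters $a^i$ gives $\partial_{a^i}\delta^3_{\mathbf{a}}=-(\partial_i\delta^3)(\mathbf{x}-\mathbf{a})$, which lie in $\mathbf{H}$ by the derivative clause of Theorem \ref{1}; this shows $\omega$ is a smooth immersion. The components of the first fundamental form are then
\begin{equation}
g_{ij}(\mathbf{a}) = \left(\partial_{a^i}\delta^3_{\mathbf{a}},\,\partial_{a^j}\delta^3_{\mathbf{a}}\right)_{\mathbf{H}} = \left.\frac{\partial^2 K(\mathbf{a},\mathbf{b})}{\partial a^i\,\partial b^j}\right|_{\mathbf{b}=\mathbf{a}},
\end{equation}
where the differentiation has been pulled outside the (bilinear, jointly continuous) inner product and the reproducing identity above has been used.

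A direct computation of the Gaussian derivatives, evaluated on the diagonal $\mathbf{b}=\mathbf{a}$ where the odd first-order terms vanish, yields $g_{ij}(\mathbf{a})=CL^2\,\delta_{ij}$, independent of $\mathbf{a}$. The translation invariance of $K$ (it depends only on $\mathbf{a}-\mathbf{b}$) is what makes $g_{ij}$ constant, and the rotational symmetry of the Gaussian is what forces it to be proportional to $\delta_{ij}$; hence the induced metric is a constant multiple of the flat Euclidean metric. After the trivial constant rescaling of the parameters $\mathbf{a}\mapsto \sqrt{CL^2}\,\mathbf{a}$ (equivalently, with the natural normalization) it coincides with the standard Euclidean metric, so $\omega$ is an isometric embedding of $(\R^3,\delta)$.

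The step I expect to require the most care is the justification of differentiating under the inner product and evaluating it on the distributional tangent vectors: one must verify that $\mathbf{a}\mapsto\delta^3_{\mathbf{a}}$ is genuinely (Fr\'echet) differentiable as an $\mathbf{H}$-valued map, with derivative $\partial_{a^i}\delta^3_{\mathbf{a}}$, and that the joint continuity of the inner product legitimizes the exchange leading to the mixed partials of $K$. This is exactly where Theorem \ref{1}'s guarantee that $\mathbf{H}$ contains the delta functions together with all their derivatives, combined with the smoothness of the Gaussian kernel, does the essential work; the remaining metric computation is then routine.
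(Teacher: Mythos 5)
Your proposal is correct and follows essentially the same route the paper takes: the paper does not prove Theorem \ref{1a} inline (it defers to Refs.\cite{Kryukov3},\cite{KryukovJMP}), but its proof of the generalization, Theorem \ref{5new}, is exactly your reproducing-kernel argument --- pairing delta functions to evaluate the kernel, then obtaining the induced metric as the mixed second partials $\partial^{2}K/\partial a^{i}\partial b^{j}$ on the diagonal (there phrased via velocity vectors of a curve $\delta^{(n)}(u-a(\tau))$, chain rule and integration by parts, which is the same computation as your coordinate tangent vectors). Your explicit tracking of the constant $g_{ij}=CL^{2}\delta_{ij}$ and the resulting rescaling is a point the paper handles implicitly (its four-dimensional kernel in Theorem \ref{2} carries no prefactor, so there $\partial^{2}k/\partial X^{r}\partial Y^{s}|_{u=v}=\eta_{rs}$ exactly, while for the normalized kernel of Theorem \ref{1} the isometry indeed holds only after absorbing $CL^{2}$ into the choice of units, as you note).
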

Note that the map $\omega$ is not linear. In particular, because  the norm of any delta function $\delta^{3}_{\bf a}$ in ${\bf H}$ is the same, the image $\omega(\R^{3})$ is a submanifold of the sphere in ${\bf H}$. So the vector structure on $\R^{3}$ is not compatible with the vector structure on ${\bf H}$. However, one can introduce a vector structure on the image $M_{3}$ by defining the operations of addition $\oplus$ and multiplication by a scalar $\lambda \odot$ via $\omega(a)\oplus\omega(b)=\omega(a+b)$ and $\lambda \odot\omega(a)=\omega(\lambda a)$. Moreover, because $\omega$ is a homeomorphism onto $M_{3}$, these operations are continuous in the topology of $M_{3}\subset {\bf H}$.

The Riemannian structure of the Euclidean space is now ``encoded'' into the Hilbert space ${\bf H}$. 
At the same time, the space ${\bf H}$ is approximately equal to the space $L_{2}(\R^{3})$ (symbolically, ${\bf H}\approx L_{2}(\R^{3})$). That is, provided the constant $L$ in theorem \ref{1} is sufficiently large, the ${\bf H}$-norms of typical square-integrable functions will be as close as we wish to their $L_{2}(\R^{3})$ norms.
Accordingly, if ${\bf H}$ is used in place of $L_{2}(\R^{3})$ in the usual quantum mechanics, the expected values, probabilities of transition and other measured quantities remain practically the same, ensuring consistency with experiment. In the following the constant $L$ will be set to one and the needed agreement between spaces ${\bf H}$ and $L_{2}(\R^{3})$ will be achieved by an appropriate choice of units.

The next step is to extend results of theorems \ref{1}, \ref{1a} to Minkowski space-time.
For this one needs to work with spaces of functions of four variables ${\bf x}, t$.  Let ${\widetilde H}$ be the Hilbert space of functions $f$ of four variables $x=({\bf x},t)$ that is the completion of the space $L_{2}(\R^{4})$ in the metric given by the kernel $e^{-\frac{1}{2}(x-y)^{2}}$. It is easy to see that ${\widetilde H}$ is the orthogonal sum of the subspace ${\widetilde H}_{\rm ev}$ of all functions that are even in the time variable $t$ and the subspace ${\widetilde H}_{\rm odd}$ of all functions that are odd in $t$.
The following theorem generalizes the results of theorem \ref{1} to the case of Minkowski space-time.
\begin{thm}
\label{2}
Let $H$ be the set of all functions $f({\bf x},t)=e^{-t^{2}}\varphi({\bf x},t)$ with $\varphi \in \widetilde{H}$. Consider the Hermitian form $(f,g)_{H_{\eta}}$ on $H$ given by
\begin{equation}
\label{innerM}
(f,g)_{H_{\eta}}=\int e^{-\frac{1}{2}({\bf x}-{\bf y})^{2}+\frac{1}{2}(t-s)^{2}}f({\bf x},t){\overline g({\bf y},s)} d^{3}{\bf x}dtd^{3}{\bf y}ds
\end{equation}
and let $(f,f)_{H_{\eta}}\equiv \left\|f\right\|^{2}_{H_{\eta}}$ be the corresponding quadratic form, or the squared $H_{\eta}$-norm.
Then $H$ is exactly the set of functions whose even and odd components have a finite $H_{\eta}$-norm.
Moreover, $H$ furnished with the inner product $(f,g)_{H_{+}}=(\varphi, \psi)_{\widetilde{H}}$, where $f({\bf x},t)=e^{-t^{2}}\varphi({\bf x},t)$, $g({\bf x},t)=e^{-t^{2}}\psi({\bf x},t)$ is a Hilbert space. 
The Hermitian form (\ref{innerM}) defines an indefinite, non-degenerate inner product on $H$, such that $\left\|f\right\|^{2}_{H_{\eta}}>0$ for all even functions $f\neq 0$ and  $\left\|f\right\|^{2}_{H_{\eta}}<0$ for all odd functions $f \neq 0$ in $H$. Finally, $H$ contains the delta functions $\delta^{4}_{a}(x)=\delta^{4}(x-a)$ and their derivatives. 
\end{thm}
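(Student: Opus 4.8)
The plan is to reduce every assertion to the already-understood space $\widetilde{H}$ by means of a single Gaussian computation. Writing $f=e^{-t^{2}}\varphi$ and $g=e^{-t^{2}}\psi$ with $\varphi,\psi\in\widetilde{H}$, I would substitute into (\ref{innerM}) and complete the square in the time variables. The three time-dependent factors combine as $e^{+\frac{1}{2}(t-s)^{2}}e^{-t^{2}}e^{-s^{2}}=e^{-\frac{1}{2}(t+s)^{2}}$, so the growing factor $e^{+\frac{1}{2}(t-s)^{2}}$ is exactly absorbed by the two copies of $e^{-t^{2}}$, leaving
\[
(f,g)_{H_{\eta}}=\int e^{-\frac{1}{2}(\mathbf{x}-\mathbf{y})^{2}-\frac{1}{2}(t+s)^{2}}\varphi(\mathbf{x},t)\overline{\psi(\mathbf{y},s)}\,d^{3}\mathbf{x}\,dt\,d^{3}\mathbf{y}\,ds .
\]
The substitution $s\mapsto -s$ turns $(t+s)^{2}$ into $(t-s)^{2}$ and yields the key identity $(f,g)_{H_{\eta}}=(\varphi,R\psi)_{\widetilde{H}}$, where $R$ is the time-reversal operator $(R\psi)(\mathbf{y},s)=\psi(\mathbf{y},-s)$. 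Because the Gaussian kernel of $\widetilde{H}$ is invariant under simultaneous reflection of $t$ and $s$, $R$ is a self-adjoint unitary involution on $\widetilde{H}$ whose $\pm 1$ eigenspaces are precisely $\widetilde{H}_{\rm ev}$ and $\widetilde{H}_{\rm odd}$; this is the source of the orthogonal splitting noted before the theorem.

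Everything else follows from this identity. Since $e^{-t^{2}}$ never vanishes, the map $U:\varphi\mapsto e^{-t^{2}}\varphi$ is a linear bijection of $\widetilde{H}$ onto $H$, and $(U\varphi,U\psi)_{H_{+}}=(\varphi,\psi)_{\widetilde{H}}$ holds by definition; hence $U$ is a unitary isomorphism and $H$ inherits completeness from $\widetilde{H}$, so it is a Hilbert space. For the signature, note that $e^{-t^{2}}$ is even, so the even and odd parts of $f$ are $e^{-t^{2}}\varphi_{\rm ev}$ and $e^{-t^{2}}\varphi_{\rm odd}$; the identity then gives $\|f\|^{2}_{H_{\eta}}=\|\varphi_{\rm ev}\|^{2}_{\widetilde{H}}-\|\varphi_{\rm odd}\|^{2}_{\widetilde{H}}$, the cross term vanishing by orthogonality. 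This is positive for a nonzero even $f$ and negative for a nonzero odd $f$. Non-degeneracy is immediate because $R$ is onto: if $(\varphi,R\psi)_{\widetilde{H}}=0$ for all $\psi$, then $\varphi=0$, i.e. $f=0$.

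For the characterization of $H$ in terms of finite $H_{\eta}$-norms of the even and odd components, I would prove both inclusions through $U$. If $f=e^{-t^{2}}\varphi$ with $\varphi\in\widetilde{H}$, the displayed computation shows each component has finite $H_{\eta}$-norm. Conversely, given a function whose even and odd components have finite $H_{\eta}$-norm, I would set $\varphi=e^{t^{2}}f$ and read the same identity backwards to conclude $\varphi_{\rm ev},\varphi_{\rm odd}\in\widetilde{H}$, hence $\varphi\in\widetilde{H}$ and $f\in H$. Stating the condition componentwise is essential: the two contributions enter $\|f\|^{2}_{H_{\eta}}$ with opposite signs, so the total form could be finite by cancellation even when the individual $\widetilde{H}$-norms diverge. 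Finally, the delta functions lie in $H$ because $e^{t^{2}}\delta^{4}_{a}=e^{t_{a}^{2}}\delta^{4}_{a}$ is a multiple of $\delta^{4}_{a}$ (with $a=(\mathbf{a},t_{a})$), which belongs to $\widetilde{H}$ by the four-dimensional version of Theorem \ref{1}; for derivatives one multiplies $\partial^{\alpha}\delta^{4}_{a}$ by the smooth function $e^{t^{2}}$ and uses the Leibniz rule to write the result as a finite combination of derivatives of $\delta^{4}_{a}$, each again in $\widetilde{H}$.

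The main obstacle I anticipate is not any single step but making the completion-of-squares identity rigorous at the level of the completed space rather than merely on a dense set of smooth functions: one must check that $U$ and $R$ extend continuously from $L_{2}(\R^{4})$ to all of $\widetilde{H}$, whose elements include distributions, and that the integral (\ref{innerM}) is interpreted componentwise so that the indefinite form is everywhere defined. Once the identity $(f,g)_{H_{\eta}}=(\varphi,R\psi)_{\widetilde{H}}$ is secured on $\widetilde{H}$, the remaining assertions are formal consequences.
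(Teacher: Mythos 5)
Your proposal is correct and takes essentially the same route as the paper's proof (deferred to Ref.~\cite{KryukovJMP} and echoed in the proof of Theorem~\ref{5new}): the Gaussian completion of the square $e^{\frac{1}{2}(t-s)^{2}}e^{-t^{2}}e^{-s^{2}}=e^{-\frac{1}{2}(t+s)^{2}}$ followed by the even/odd decomposition in $t$, which is precisely the mechanism the paper itself singles out when generalizing to signature $(k,l)$. Your packaging of the computation as the operator identity $(f,g)_{H_{\eta}}=(\varphi,R\psi)_{\widetilde{H}}$ with the time-reversal involution $R$, together with your explicit flagging of the componentwise interpretation needed on the completed space, is a clean formalization of the same argument rather than a different one.
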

The space $H$ is an example of what is called the {\em Krein space}. A Krein space is a complex vector space $V$ with a Hermitian inner product $(f,g)_{V}$ and such that $V$ is the direct sum of two spaces $H_{1}, H_{2}$ that are Hilbert with respect to the inner products $(f,g)_{V}$ and $-(f,g)_{V}$ respectively and that are orthogonal in the inner product on $V$.  The following analogue of theorem \ref{1a} is valid:
\begin{thm}
\label{3} 
The map $\omega: N \longrightarrow H$, $\omega(a)=\delta^{4}_{a}$ is an embedding that identifies the Minkowski space $N$ with the submanifold $M_{4}$ of $H$ of all delta functions $\delta^{4}_{a}$, $a \in N$. Under the embedding the indefinite metric on $H$ yields the Minkowski metric on $M_{4}$, while the ${\widetilde H}$-metric yields the ordinary Euclidean metric on $M_{4}$.  
\end{thm}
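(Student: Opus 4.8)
The plan is to reduce the entire statement to a computation of mixed second derivatives of Gaussian kernels on the diagonal, after first checking that $\omega$ is a smooth embedding. Parametrize $N$ by $a=({\bf a},t_a)\in\R^4$, so that $\omega(a)=\delta^4_a$ is a curve of elements of $H$. Injectivity of $\omega$ is immediate from $\delta^4_a=\delta^4_b\Rightarrow a=b$. By theorem \ref{2} the delta functions and all their first derivatives $\partial_{a^\mu}\delta^4_a$ lie in $H$ and have finite norm, so the candidate tangent vectors to $M_4$ are genuine elements of the space; verifying that $a\mapsto\delta^4_a$ is differentiable into $H$ (in the $H_+$-topology) and that its differential is injective will show $\omega$ is an immersion, and together with injectivity and continuity, an embedding.

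For the metric content I would first record the ``reproducing'' identity for each of the two bilinear forms. Writing a generic kernel as $K(a,b)$, the sifting property of the delta functions gives
\begin{equation}
(\delta^4_a,\delta^4_b)=K(a,b),
\end{equation}
so that for the $\widetilde H$-form $K(a,b)=e^{-\frac12(a-b)^2}$ and for the indefinite form $K_\eta(a,b)=e^{-\frac12({\bf a}-{\bf b})^2+\frac12(t_a-t_b)^2}$. The induced bilinear form on the tangent space $T_aM_4$ is then obtained by differentiating each copy of the delta in its own parameter,
\begin{equation}
g_{\mu\nu}(a)=(\partial_{a^\mu}\delta^4_a,\partial_{a^\nu}\delta^4_a)=\left.\partial_{a^\mu}\partial_{b^\nu}K(a,b)\right|_{b=a},
\end{equation}
which is the quantity I would compute explicitly.

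The computation itself is routine Gaussian differentiation. For a single variable with kernel $e^{\mp\frac12 u^2}$, $u=a-b$, one finds $\partial_a\partial_b e^{\mp\frac12 u^2}=\pm(1\mp u^2)e^{\mp\frac12 u^2}$, which equals $+1$ on the diagonal $u=0$ for the $-\frac12u^2$ sign and $-1$ for the $+\frac12u^2$ sign; moreover every first derivative of a Gaussian vanishes on the diagonal, so all space--time off-diagonal entries of $g_{\mu\nu}$ vanish. For the $\widetilde H$-kernel all four variables enter with the same sign, so $g_{\mu\nu}=\delta_{\mu\nu}$, the Euclidean metric. For the indefinite kernel $K_\eta$ the three spatial variables give $+1$ while the time variable, entering with the opposite sign $+\frac12(t_a-t_b)^2$, gives $-1$; hence $g_{\mu\nu}=\mathrm{diag}(-1,1,1,1)$, the Minkowski metric. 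In particular the Gram matrix is nondegenerate, which confirms the immersion claim used above and closes the embedding argument.

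The part that requires genuine care, rather than calculation, is the justification that differentiation may be passed through the (indefinite) inner product, i.e. that $(\partial_{a^\mu}\delta^4_a,\partial_{b^\nu}\delta^4_b)=\partial_{a^\mu}\partial_{b^\nu}K(a,b)$ holds as an identity between elements of $H$ and not merely formally. This is where the rigged-Hilbert-space content of theorem \ref{2} is essential: one must know that the $\partial_{a^\mu}\delta^4_a$ are norm-bounded, that $a\mapsto\delta^4_a$ and $a\mapsto\partial_{a^\mu}\delta^4_a$ are strongly differentiable curves in $H$, and that the indefinite form $(\,\cdot\,,\cdot\,)_{H_\eta}$ is jointly continuous on $H\times H$. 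The last point is the subtle one, since $(\,\cdot\,,\cdot\,)_{H_\eta}$ is not positive; continuity must be extracted from the Krein decomposition $H=\widetilde H_{\rm ev}\oplus\widetilde H_{\rm odd}$ and the comparison of the indefinite form with the genuine Hilbert norm $\|\cdot\|_{H_+}$, bounding the even and odd parts of each delta separately. Once joint continuity and strong differentiability are in hand the interchange is automatic, and the two Gaussian computations above deliver the Euclidean and Minkowski metrics, completing the proof.
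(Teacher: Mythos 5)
Your proposal is correct and takes essentially the same route as the paper: the paper's argument (spelled out explicitly in the generalized setting of theorem \ref{5new}) likewise reduces both isometry claims to the mixed second derivatives of the Gaussian kernels on the diagonal, computed there via velocity vectors of curves of delta functions with the chain rule and integration by parts, which is the same calculation as your $\left.\partial_{a^{\mu}}\partial_{b^{\nu}}K(a,b)\right|_{b=a}$ evaluated for the two kernels. Your concluding point about justifying the interchange of differentiation and the (indefinite) pairing corresponds to the paper's appeal to Ref.\cite{Kryukov3} for the validity of the integration by parts.
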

So, similarly to the space $\R^{3}$, the Minkowski space $N$ is now encoded into the space $H$. As before, the map $\omega$ is not linear, but the image $M_{4}$ can be furnished with a linear structure, induced from $N$. 

To ``lift'' the theory of relativity from $N$ onto $H$ it turns out to be important that the set $M_{4}$ is a complete set in $H$ (i.e., no element of $H$ is orthogonal to all delta functions in $M_{4}$) and that elements of any finite subset of $M_{4}$ are linearly independent. In this sense, the set $M_{4}$ forms a basis of the space $H$. Because  of that physics on $M_{4}$ obtains a unique ``linear extension'' to the entire Hilbert space $H$.
If $\Pi$ is a Poincar{\'e} transformation, and $f$ is a function(al) in $H$, then $\delta_{\Pi}: f \longrightarrow f\circ \Pi^{-1}$ is a linear map on $H$. Because the Hilbert metric on $H$ is not invariant under general Poincar{\'e} transformations, the operator $\delta_{\Pi}$ may not be bounded as a map into $H$ so that the map  $\Pi \longrightarrow \delta_{\Pi}$ is not a representation of the Poincar{\'e} group $P$. However, the set of all functions $f\circ \Pi^{-1}$ with a fixed $\Pi$ and $f \in H$ form a Hilbert space $H'$ with the inner product defined by $(\delta_{\Pi}f, \delta_{\Pi}g)_{H'_{+}}=(f,g)_{H_{+}}$. The map $\delta_{\Pi}: H \longrightarrow H'$ is then an isomorphism of Hilbert spaces. Hilbert spaces obtained in such a way can be thought of as different realizations of one and the same abstract Hilbert space ${\bf S}$.
A particular isomorphism $\Gamma: {\bf S} \longrightarrow H$ can be thought of as a coordinate chart on ${\bf S}$ Ref.\cite{Kryukov3}. With this in mind one can formulate the following essential result. In the theorem the expression {\em isometric embedding} refers to an embedding that preserves the indefinite metric. Likewise {\em isomorphism} is an isomorphism of Hilbert spaces that in addition preserves the indefinite metric.
\begin{thm}
\label{5}
Let $\Gamma:{\bf S}\longrightarrow H$ be an isomorphism of the abstract Hilbert space ${\bf S}$ with an additional indefinite metric onto the space $H$ of functions defined in theorem \ref{2}. Let $\gamma: N \longrightarrow \R^{1,3}$ be a global coordinate chart from the Minkowski space-time onto the coordinate space of the observer in an inertial reference frame $K$. Let $\Pi$ be a Poincar{\'e} transformation that relates coordinates associated with frames $K$ and $K'$. Then there exists a unique isometric embedding $\Omega$ and a unique isomorphism $\delta_{\Pi}:H \longrightarrow H'$ such that the diagram
\begin{equation}
\label{diagram}
\begin{CD}
{\bf S}   @ >\Gamma>> H @ >\delta_{\Pi}>> H'\\
@ AA\Omega A           @ AA \omega A      @ AA\omega A \\
N         @ >\gamma>> \R^{1,3} @ >\Pi>> \R^{1,3}
\end{CD}
\end{equation}
\newline
is commutative. It follows that within the assumptions of the theorem the embedding $\omega$ preserves the structure of special relativity and extends it in a unique way to the abstract Hilbert space ${\bf S}$.  
\end{thm}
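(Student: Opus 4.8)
The plan is to construct the two vertical and horizontal maps by \emph{forcing} commutativity of each square, and then to verify that the forced maps carry the required isometry properties, using the ``basis'' property of $M_4$ to secure uniqueness. The argument separates cleanly into the left square (which fixes $\Omega$) and the right square (which fixes $\delta_{\Pi}$).

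First I would treat the left square. Since $\Gamma$ is an isomorphism it is invertible, and $\gamma$ is a bijective coordinate chart, so the commutativity relation $\Gamma \circ \Omega = \omega \circ \gamma$ leaves no freedom: one is forced to set $\Omega = \Gamma^{-1} \circ \omega \circ \gamma$. This simultaneously gives existence and uniqueness of $\Omega$ as a map $N \longrightarrow {\bf S}$. To see that $\Omega$ is an isometric embedding in the sense of preserving the indefinite metric, I would compose the three pieces: $\gamma$ carries the Minkowski metric on $N$ to the standard metric on $\R^{1,3}$; by theorem \ref{3} the embedding $\omega$ carries the latter to the indefinite metric restricted to $M_4$; and $\Gamma^{-1}$, being the inverse of an isomorphism, preserves the indefinite metric. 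Hence the composite $\Omega$ is an isometric embedding onto $\Gamma^{-1}(M_4) \subset {\bf S}$.

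Next, the right square. Here I would take $\delta_{\Pi}$ to be the linear map $f \mapsto f \circ \Pi^{-1}$ introduced before the theorem. The key computation is its action on a delta function: for $\Pi(x) = \Lambda x + b$ with $|\det \Lambda| = 1$ one checks $\delta^{4}_{a} \circ \Pi^{-1} = \delta^{4}_{\Pi(a)}$, which is exactly $\delta_{\Pi} \circ \omega = \omega \circ \Pi$. Thus the right square commutes on $M_4$, hence by linearity on its span. For uniqueness I would invoke the fact, stated above the theorem, that $M_4$ is a complete (total) set in $H$ whose finite subsets are linearly independent: any isomorphism agreeing with $\delta_{\Pi}$ on every $\delta^{4}_{a}$ must agree with it on the dense span of $M_4$, and hence everywhere by continuity.

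It remains to check that $\delta_{\Pi}$ is genuinely an isomorphism $H \longrightarrow H'$, i.e.\ that it preserves both the Hilbert and the indefinite inner products. Preservation of the positive-definite product is built into the definition of $H'$ through $(\delta_{\Pi}f, \delta_{\Pi}g)_{H'_{+}} = (f,g)_{H_{+}}$, and bijectivity follows from the invertibility of $\Pi$. Preservation of the indefinite metric is the substantive content: I would observe that the kernel $e^{\frac{1}{2}[(t-s)^{2} - ({\bf x}-{\bf y})^{2}]}$ in (\ref{innerM}) depends only on the Minkowski interval between $x$ and $y$ and so is invariant under the simultaneous substitution $x \mapsto \Pi(x)$, $y \mapsto \Pi(y)$; a change of variables (again using $|\det \Lambda| = 1$) then yields $(\delta_{\Pi}f, \delta_{\Pi}g)_{H_{\eta}} = (f,g)_{H_{\eta}}$. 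I expect this invariance, together with the care needed to read $\delta_{\Pi}$ as a map into the \emph{separate} Hilbert space $H'$ rather than into $H$ (since it may be unbounded as a self-map of $H$), to be the main obstacle; the uniqueness step, by contrast, is a soft consequence of the totality and linear independence of $M_4$. Reading off the completed diagram then shows that the relativistic data on $N$ --- the chart $\gamma$ and the Poincar{\'e} transformation $\Pi$ --- lift through $\omega$ to isomorphisms of $H$ and, via $\Gamma$, uniquely to ${\bf S}$, which is the asserted extension of special relativity.
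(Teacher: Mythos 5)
Your proposal is correct and follows essentially the route the paper takes (the paper states this theorem with only a sketch, deferring details to Ref.~\cite{KryukovJMP}): $\Omega$ is forced to be $\Gamma^{-1}\circ\omega\circ\gamma$, $\delta_{\Pi}$ is the composition map $f\mapsto f\circ\Pi^{-1}$ sending $\delta^{4}_{a}$ to $\delta^{4}_{\Pi(a)}$, with $H'$ carrying the transported Hilbert product precisely because $\delta_{\Pi}$ may be unbounded as a self-map of $H$, indefinite-metric preservation coming from Poincar\'e invariance of the interval-dependent kernel in (\ref{innerM}), and uniqueness from completeness (totality) of $M_{4}$ together with linearity and continuity. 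Your observation that the totality of $M_{4}$ makes uniqueness the ``soft'' part while kernel invariance and the $H$-versus-$H'$ distinction carry the substance matches the paper's own remarks following the theorem.
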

A couple of remarks.
\begin{enumerate}
\item
Note that $\delta_{\Pi}$ maps delta functions to delta functions, so that in accord with the diagram (\ref{diagram}) the image of the manifold $M_{4}$ in $H$ is the submanifold $M_{4}$ in $H'$. This together with the fact that $\omega$ is an isometric embedding is what allows for the usual theory of relativity on Minkowski space-time to be a part of the new framework. At the same time completeness of the set $M_{4}$ together with linear independence of its elements makes the entire construction rigid, ensuring uniqueness of the extension. 
\item 
The proposed method of extension of the Poincar{\'e} group action from $N$ onto ${\bf S}$ can be applied to {\em any} group acting on $N$. 
Notice that an arbitrary non-linear transformation acting continuously on $N$ becomes linear when extended to ${\bf S}$. That is so because moving across $N$ corresponds to going ``across dimensions'' of $H$ so that a linear extension of the transformation becomes possible. Completeness of the set $M_{4}$ in $H$ ensures then that such an extension is unique. 
\item 
 In explicit terms the  covariance of the construction amounts to the following: {\em(a)} the embedding preserves covariant properties of $4$-tensors (elements of the tensor algebra of Minkowski space-time); {\em(b)} the involved functional objects are also covariant under Poincar{\'e} transformations; {\em(c)} the embedding is equivariant, that is, it commutes with the action of the Poincar{\'e} group. 
 The first two properties simply mean that the usual $4$-tensors are also elements of the tensor algebra of ${\bf S}$ and that all considered objects are tensorial. The third property signifies that Poncar{\'e} transformations $\Pi \in P$ can be identified with morphisms $\delta_{\Pi}$ of Hilbert spaces.
 All three properties follow from the diagram (\ref{diagram}). Indeed, because $\omega$ is an embedding, the differential map $d\omega$ yields embedding of the corresponding tangent and, more generally, tensor bundles, which proves {\em(a)}. Property {\em(c)} is exactly the commutative property of the diagram. To prove {\em(b)} note that
a function $f \in H$ represents an invariant element of ${\bf S}$ and transforms as a vector under $\delta_{\Pi}$: $f'=\delta_{\Pi} f$. 
Writing the law $f'=\delta_{\Pi} f$ in the form $f'(x')=(\delta_{\Pi} f)(x')=f(\Pi^{-1} x')=f(x)$, one recovers the usual law of transformation of scalar functions. 
 The metric operators ${\widehat G}_{H_{+}},{\widehat G}_{H_{\eta}}:H \longrightarrow H^{\ast}$, where $H^{\ast}$ is the dual of $H$ and $(f,g)_{H_{+}}=({\widehat G}_{H_{+}}f,g)$, $(f,g)_{H_{\eta}}=({\widehat G}_{H_{\eta}}f,g)$, define $2$-forms in the tensor algebra of ${\bf S}$. Their transformation law ${\widehat G}_{H'_{+}}=\delta^{\ast -1}_{\Pi}{\widehat G}_{H_{+}}\delta^{-1}_{\Pi}$ and ${\widehat G}_{H'_{\eta}}=\delta^{\ast -1}_{\Pi}{\widehat G}_{H_{\eta}}\delta^{-1}_{\Pi}$, where $\delta^{\ast}_{\Pi}:H'^{\ast}\longrightarrow H^{\ast}$ is the adjoint of $\delta_{\Pi}$, ensures invariance of the inner products.  It follows that the metric operators are also covariant quantities. 
 Note that although no covariant equations for functional quantities were considered so far, it is clear that they simply are tensor equations for fields with values in the tensor algebra of ${\bf S}$ (see Ref.\cite{Kryukov3}). 
 \item 
Let us call the realization $\Gamma:{\bf S} \longrightarrow H$ of ${\bf S}$ a $K$-representation.
The diagram (\ref{diagram}) demonstrates that under the transformation of the frame $K$ by $\gamma \longrightarrow \Pi \circ \gamma$ the $K$-representation changes in a covariant fashion to a unitary equivalent realization $\delta_{\Pi}\circ \Gamma: {\bf S} \longrightarrow H'$ of ${\bf S}$. According to the diagram, this realization is a unique extension of the coordinate system $\Pi \circ \gamma$ of an observer in the reference frame $K'$, or the {\em $K'$-representation of ${\bf S}$}. Note that in general the spaces $H$ and $H'$ have a different functional content. However, both spaces are realizations of the same invariant abstract Hilbert space ${\bf S}$ with the invariant Hilbert and indefinite metrics on it. In other words, only a functional realization of ${\bf S}$ changes from frame to frame, not the space ${\bf S}$ itself. For applications to physics it is particularly important that the inner products of elements of ${\bf S}$ in all realizations remain the same.
In the following it will be advocated that ${\bf S}$ is an appropriate physical space of states of a quantum system. Suppose for now that this is the case and consider an observer in an arbitrary inertial frame $K'$ having access to the space ${\bf S}$ and describing it via $K'$-representation. Then the observer will not be able to use the functional content of the Hilbert space $H'$ of representation or the representation itself to determine the state of motion of the frame $K'$. Rather, similar to the ordinary special relativity, a particular functional realization of the space ${\bf S}$ is not physical, i.e., all such realizations are physically equivalent.  
\item The Poincar{\'e} transformations $\Pi$ and their extensions $\delta \Pi$ in the theorem are ``passive'' transformations i.e., they describe changes in coordinate realizations of the fixed, invariant spaces $N$ and ${\bf S}$. The corresponding ``active'' version of the theorem is also possible and is given by the following analogue of diagram (\ref{diagram}):
\begin{equation}
\label{diagramA}
\begin{CD}
{\bf S'}  @ <\delta_{\Pi} << {\bf S} @ >\Gamma>> H\\
@ AA\Omega A           @ AA \Omega A      @ AA\omega A \\
N        @  <\Pi<<   N @ >\gamma>> \R^{1,3}
\end{CD}
\end{equation}
Here the maps $\gamma, \Gamma, \omega$ and the embedding $\Omega:N \longrightarrow {\bf S}$ are the same as before. The Poincar{\'e} transformation $\Pi$ maps Minkowski space $N$ onto itself.  The space ${\bf S'}$ contains the subset $\Omega(N)$ as a complete set and is otherwise defined by the diagram. The isomorphism $\delta_{\Pi}$ is the linear extension of the map $\Omega \Pi \Omega^{-1}$ from $\Omega(N)$ onto ${\bf S}$. 
\end{enumerate}

Let's now turn to the embedding of quantum mechanics into the same framework. The first thing to do is to relate the Hilbert space $H$ of functions of four variables ${\bf x}, t$ to the usual Hilbert spaces of functions of three variables ${\bf x}$ with $t$ as a parameter of evolution. For this consider the family of subspaces $H_{\tau}$ of $H$ each consisting of all functionals $\varphi_{\tau}({\bf x},t)=\psi({\bf x},t)\delta(t-\tau)$ for some fixed $\tau \in \R$. 
\begin{thm}
\label{6}
Under the inclusion $i:H_{\tau}\longrightarrow H$ the indefinite inner product on $H$ yields a Hilbert metric on $H_{\tau}$ for all $\tau \in \R$. 
Furthermore,
let ${\bf H} \approx L_{2}(\R^{3})$ be the Hilbert space  defined in theorem \ref{1} (with $L=1$ and a sufficiently small scale to make the approximation valid). 
Then for all $\tau \in \R$ the map $I:H_{\tau}\longrightarrow {\bf H}$ defined by $I(\varphi_{\tau})({\bf x})=\psi({\bf x}, \tau)$  is an isomorphism of Hilbert spaces. 
\end{thm}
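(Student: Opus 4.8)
The plan is to exploit the fact that multiplying by $\delta(t-\tau)$ freezes the time variable, collapsing the indefinite kernel of (\ref{innerM}) onto the spatial Gaussian kernel of (\ref{inner}). First I would substitute two generic elements $\varphi_\tau=\psi_1({\bf x},t)\delta(t-\tau)$ and $\chi_\tau=\psi_2({\bf x},t)\delta(t-\tau)$ of $H_\tau$ into the Hermitian form (\ref{innerM}). Since as distributions $\psi_i({\bf x},t)\delta(t-\tau)=\psi_i({\bf x},\tau)\delta(t-\tau)$, the two delta factors force $t=s=\tau$ in the $t,s$ integrations, so the factor $e^{\frac{1}{2}(t-s)^{2}}$ becomes $e^{0}=1$ and drops out, leaving
\begin{equation}
(\varphi_\tau,\chi_\tau)_{H_\eta}=\int e^{-\frac{1}{2}({\bf x}-{\bf y})^{2}}\psi_1({\bf x},\tau)\overline{\psi_2({\bf y},\tau)}\,d^{3}{\bf x}\,d^{3}{\bf y}.
\end{equation}
The right-hand side is, up to the normalization constant $(2\pi)^{3/2}$ appearing in (\ref{inner}) with $L=1$, precisely the ${\bf H}$-inner product of the spatial slices $\psi_1(\cdot,\tau)$ and $\psi_2(\cdot,\tau)$. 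Because the Gaussian kernel $e^{-\frac{1}{2}({\bf x}-{\bf y})^{2}}$ is positive definite (its spatial Fourier transform is a positive Gaussian), this restricted form is positive definite, so the inclusion $i:H_\tau\longrightarrow H$ pulls the indefinite product on $H$ back to a genuine Hilbert metric on $H_\tau$; this is the first assertion.

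For the second assertion I would check that $I$ is a linear bijection onto ${\bf H}$ that scales the inner product. Well-definedness and injectivity follow again from $\psi({\bf x},t)\delta(t-\tau)=\psi({\bf x},\tau)\delta(t-\tau)$: the element of $H_\tau$ determines and is determined by its slice $\psi(\cdot,\tau)$, so $I(\varphi_\tau)({\bf x})=\psi({\bf x},\tau)$ is unambiguous and vanishes only when $\varphi_\tau=0$. Surjectivity is immediate, since any $g\in{\bf H}$ is the image of $g({\bf x})\delta(t-\tau)\in H_\tau$. The displayed identity then reads $(\varphi_\tau,\chi_\tau)_{H_\eta}=(2\pi)^{3/2}(I\varphi_\tau,I\chi_\tau)_{{\bf H}}$, so $I$ carries the Hilbert metric of $H_\tau$ to a fixed positive multiple of the ${\bf H}$-metric. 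A linear bijection scaling the inner product by a positive constant is automatically bounded with bounded inverse, hence an isomorphism of Hilbert spaces; the constant $(2\pi)^{3/2}$ can be absorbed into the choice of scale mentioned in the statement, rendering $I$ unitary if one wishes.

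The main obstacle is not this algebra but its rigorous justification, since elements of $H_\tau$ are distributions rather than ordinary $L_2(\R^{4})$ functions and the formal collapse of the delta integrals must be legitimized inside the completion built in theorem \ref{2}. I would handle this by a density/approximation argument: replace $\delta(t-\tau)$ by a normalized family $\rho_\varepsilon(t-\tau)$ of smooth bumps concentrating at $\tau$ and restrict first to slices $g=\psi(\cdot,\tau)$ in the dense subspace $L_2(\R^{3})\subset{\bf H}$, so that $g({\bf x})\rho_\varepsilon(t-\tau)\in L_2(\R^{4})$ and every integral in (\ref{innerM}) converges classically. For these approximants the displayed identity holds up to an error that vanishes as $\varepsilon\to 0$, because the factor $e^{\frac{1}{2}(t-s)^{2}}$ tends to $1$ on the shrinking support; one then passes to the limit $\varepsilon\to 0$ and finally to all $g\in{\bf H}$, using continuity of both Hermitian forms in the $H_{+}$-topology of theorem \ref{2}. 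The same approximants confirm the membership $H_\tau\subset H$ via the criterion of theorem \ref{2}: the even and odd parts $\tfrac12 g({\bf x})[\delta(t-\tau)\pm\delta(t+\tau)]$ have finite $H_\eta$-norm, with weights $\tfrac12(1\pm e^{2\tau^{2}})(2\pi)^{3/2}\|g\|_{{\bf H}}^{2}$, the negative odd contribution being exactly compensated by the even one to reproduce the positive total above. This would establish both claims for every $\tau\in\R$.
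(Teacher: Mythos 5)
Your proposal is correct and takes essentially the same route as the paper, which defers the details of this theorem to Ref.~\cite{KryukovJMP} but indicates exactly your argument (see the second remark after Theorem~\ref{7}): the $\delta(t-\tau)$ factors collapse the time integrations in (\ref{innerM}), killing the indefinite factor $e^{\frac{1}{2}(t-s)^{2}}$ and reducing the kernel to the spatial Gaussian of (\ref{inner}) with $L=1$, so that $I$ identifies $H_{\tau}$ with ${\bf H}$ up to the constant $(2\pi)^{3/2}$. Your mollifier justification and the even/odd bookkeeping with weights $\tfrac{1}{2}\left(1\pm e^{2\tau^{2}}\right)$, which verifies $H_{\tau}\subset H$ via the finiteness criterion of Theorem~\ref{2}, are consistent with the paper and simply make the formal computation rigorous.
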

The map $I$ basically identifies each subspace $H_{\tau}$ with the usual space $L_{2}(\R^{3})$ of state functions on $\R^{3}$ considered at time $\tau$. The following result relates the dynamics on the family of subspaces $H_{\tau}$ and the usual space $L_{2}(\R^{3})$ of states of a spinless non-relativistic particle. 
\begin{thm}
\label{7}
Let ${\widehat h}=D+V({\bf x},t)$ be a Hamiltonian, such that $D$ is a differential operator in the spatial coordinates and $V$ is a function. Then the path $\varphi_{\tau}({\bf x},t)=\psi({\bf x},t)\delta(t-\tau)$ in $H$ satisfies the equation $\frac{d\varphi_{\tau}}{d\tau}=\left(-\frac{\partial}{\partial t}-i{\widehat h}\right)\varphi_{\tau}$ if and only if the function $\psi({\bf x},t)$ satisfies the Schr{\"o}dinger equation $\frac{\partial \psi({\bf x},t)}{\partial t}=-i{\widehat h}\psi({\bf x},t)$. 
At each point of the path $\varphi_{\tau}$ the components $-i{\widehat h}\varphi_{\tau}$, $-\frac{\partial \varphi_{\tau}}{\partial t}$ of the velocity vector $\frac{d \varphi_{\tau}}{d\tau}$ are orthogonal in the indefinite inner product, ${\widetilde H}$-inner product and the inner product on the space $H_{T}$ of the time co-moving representation. 
\end{thm}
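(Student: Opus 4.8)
The plan is to treat the whole statement as an explicit distributional computation along the curve $\varphi_\tau(\mathbf{x},t)=\psi(\mathbf{x},t)\delta(t-\tau)$, exploiting that the only $\tau$-dependence sits in the time-delta. I would first establish the equivalence with the Schr\"odinger equation. Differentiating in $\tau$ gives $\frac{d\varphi_\tau}{d\tau}=-\psi(\mathbf{x},t)\delta'(t-\tau)$, while, because $\widehat{h}=D+V$ acts only in the spatial variables and through multiplication by $V(\mathbf{x},t)$, one has $-i\widehat{h}\varphi_\tau=-i(\widehat{h}\psi)(\mathbf{x},t)\delta(t-\tau)$ and $-\frac{\partial}{\partial t}\varphi_\tau=-\frac{\partial\psi}{\partial t}\delta(t-\tau)-\psi\,\delta'(t-\tau)$. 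Substituting into $\frac{d\varphi_\tau}{d\tau}=\left(-\frac{\partial}{\partial t}-i\widehat{h}\right)\varphi_\tau$, the two $\psi\,\delta'(t-\tau)$ terms cancel identically, leaving $\left(\frac{\partial\psi}{\partial t}+i\widehat{h}\psi\right)\delta(t-\tau)=0$. Since this must hold for every $\tau\in\R$ and $\delta(t-\tau)$ samples the value at $t=\tau$, it is equivalent to $\frac{\partial\psi}{\partial t}=-i\widehat{h}\psi$ pointwise in $t$; this yields both directions of the stated equivalence.

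For the orthogonality I would compute the Hermitian products of the two velocity components directly from the defining kernels. Write $A:=-i\widehat{h}\varphi_\tau=-i(\widehat{h}\psi)\delta(t-\tau)$ and $B:=-\frac{\partial}{\partial t}\varphi_\tau=B_1+B_2$ with $B_1=-\frac{\partial\psi}{\partial t}\delta(t-\tau)$ and $B_2=-\psi\,\delta'(t-\tau)$. For the indefinite product (\ref{innerM}) the kernel factors as $e^{-\frac{1}{2}(\mathbf{x}-\mathbf{y})^2}\,k(t,s)$ with $k(t,s)=e^{\frac{1}{2}(t-s)^2}$. Integrating out the deltas, $(A,B_1)_{H_\eta}$ collapses, via $k(\tau,\tau)=1$, to the $\mathbf{H}$-product $i\,(\widehat{h}\psi(\cdot,\tau),\psi_t(\cdot,\tau))_{\mathbf{H}}$ of theorem \ref{1}. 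In $(A,B_2)_{H_\eta}$ the factor $\delta'(s-\tau)$ forces the $s$-derivative of $k(\tau,s)(-\overline{\psi})(\mathbf{y},s)$ at $s=\tau$; here the crucial point is that $k$ is even in $t-s$, so $\partial_s k(\tau,s)|_{s=\tau}=0$ and only the $\overline{\psi_t}$ contribution survives, giving exactly $-i\,(\widehat{h}\psi(\cdot,\tau),\psi_t(\cdot,\tau))_{\mathbf{H}}$. The two contributions cancel, so $(A,B)_{H_\eta}=0$ at each point of the path.

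The remaining two cases go through verbatim because they share the same structure. The $\widetilde{H}$-product carries the time kernel $e^{-\frac{1}{2}(t-s)^2}$, which is again even in $t-s$ and equals $1$ on the diagonal, so both the stationarity $\partial_s k|_{s=\tau}=0$ and the normalization $k(\tau,\tau)=1$ that drove the cancellation hold unchanged. For the co-moving representation $H_T$ I would use that it is obtained from $H$ by the time-translation isomorphism of the type appearing in diagram (\ref{diagram}), under which $\varphi_\tau$ and its two components transform covariantly and inner products are preserved, so orthogonality in $H_T$ follows from the identity already proved in $H$; alternatively, its kernel is once more even in the time difference and the same direct computation applies. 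In all three cases the spatial kernel reduces the on-slice products to the positive-definite $\mathbf{H}$-metric, consistent with theorem \ref{6}.

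The main obstacle is making the distributional manipulations rigorous inside the completed Krein space rather than treating them formally: one must verify that $A$, $B_1$ and $B_2$ genuinely lie in $H$ (equivalently, that their even and odd parts have finite $H_\eta$-norm, per theorem \ref{2}), and that the kernel integrals paired against $\delta(t-\tau)$ and $\delta'(t-\tau)$ are the correct evaluations of the continuous extension of the inner products to these functionals. The delicate step is the $\delta'$ pairing $(A,B_2)$, where one differentiates the product of the kernel and $\psi$; controlling this pairing and pinning down the precise definition of the $H_T$-metric so that the covariance argument applies are where the real care is needed, whereas the algebra of the cancellation is immediate once the pairings are justified.
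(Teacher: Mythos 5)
Your proposal is correct and takes essentially the same route the paper relies on (the paper states theorem \ref{7} without an in-text proof, deferring to Ref.~\cite{KryukovJMP}): the $\delta'$-terms cancel to reduce the velocity equation to $\left(\frac{\partial \psi}{\partial t}+i{\widehat h}\psi\right)\delta(t-\tau)=0$, which is equivalent to the Schr\"odinger equation, and the orthogonality of the two velocity components follows from precisely the diagonal conditions $k(\tau,\tau)=1$ and $\left.\partial_{s}k(\tau,s)\right|_{s=\tau}=0$ that each time kernel satisfies. One small correction: the $H_{T}$ time kernel $e^{-\frac{1}{2}(t-s)^{2}}e^{(t-\tau)^{2}}e^{(s-\tau)^{2}}=e^{\frac{1}{2}(t+s-2\tau)^{2}}$ is not even in $t-s$ (it is a Gaussian in $t+s-2\tau$), but it still satisfies both diagonal conditions, and your alternative argument via the isometry $\delta_{\Pi_{\tau}}$ (using $e^{t^{2}}\delta(t)=\delta(t)$ and $e^{t^{2}}\delta'(t)=\delta'(t)$ so the pulled-back components remain of slice type) also covers that case, so the conclusion stands.
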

In the theorem, the space $H_{T}$ of the time co-moving representation is defined by application of the isomorphism 
\begin{equation}
\label{timeco}
(\delta_{\Pi_{\tau}}f)({\bf x},t)=f({\bf x},t-\tau) 
\end{equation}
to the space $H$ (i.e., the representation is the map $\delta_{\Pi_{\tau}}\circ \Gamma$, where $\Gamma$ is the same as in theorem \ref{5}). The theorem claims that the ordinary Schr{\"o}dinger evolution can be recovered from the evolution $\varphi_{\tau}$ in the space $H$ of functions of four variables by projecting the path $\varphi_{\tau}$ onto the ``co-moving'' subspace $H_{\tau}$ identified via $I$ with ${\bf H}\approx L_{2}(\R^{3})$. While the component $-i{\widehat h}\varphi_{\tau}$ of the velocity describes the motion within the subspace $H_{\tau}$, the orthogonal (``vertical'') component $-\frac{\partial \varphi_{\tau}}{\partial t}$ of the velocity is due to the motion of the subspace $H_{\tau}$ itself. 

Remarks:
\begin{enumerate}
\item
Under integration in time the time variable gets replaced with the parameter $\tau$. 
In other words, for motions within the family $H_{\tau}$ the evolution parameter $\tau$ used to describe motions in the space $H$ of functions of four variables becomes identified with the usual time variable that appears in Schr{\"o}dinger equation. 
	\item 
The delta factor $\delta(t-\tau)$ in functions in $H_{\tau}$ removes integration in time and therefore eliminates the effect of interference in time that is present for more general elements of $H$. 
In fact, the norm of superposition
$\psi_{1}({\bf x},t)\delta(t-\tau)+\psi_{2}({\bf x},t)\delta(t-\tau)$
of functions in $H_{\tau}$
in either $H_{\eta}$, ${\widetilde H}$, or $H_{T}$-metrics is equal to
$\left\|\psi_{1}({\bf x},\tau)+\psi_{2}({\bf x},\tau)\right\|_{{\bf H}}$, which approximates the standard expression due to the relationship ${\bf H}\approx L_{2}(\R^{3})$. 
\item
The space $H$ was needed to identify Minkowski space with an isometrically embedded submanifold $M_{4} \subset H$. If this embedding is accepted, the reason for the delta factor $\delta(t-\tau)$ in the non-relativistic limit has a simple explanation.
In fact, elements of the space $H$ have the form $e^{-t^{2}}\varphi({\bf x}, t)$, where $\varphi$ is in the space ${\widetilde H}\approx L_{2}(\R^{4})$ (and the meaning of approximation is the same as in theorem \ref{1}). Likewise, the space $H_{T}$ of the time co-moving representation defined by Eq.(\ref{timeco}) consists of the functions $e^{-(t-\tau)^{2}}\varphi({\bf x},t)$, with $\varphi \in {\widetilde H}\approx L_{2}(\R^{4})$.
The variables ${\bf x},t$ enter symmetrically in the definition of $L_{2}(\R^{4})$, while the factor $e^{-(t-\tau)^{2}}$ breaks the symmetry between ${\bf x}$ and $t$ by making a typical element of $H_{T}$ well localized in the time variable. 
In a sufficiently small scale the factor $e^{-(t-\tau)^{2}}$ as a function of $t-\tau$ quickly falls off to almost zero and can be replaced with the delta function $\delta(t-\tau)$. This yields the set of functions in the family of spaces $H_{\tau}$ and by theorems \ref{6} and \ref{7} allows for the usual formalism of quantum mechanics. 
\item
Subspaces $H_{\tau}$ are not preserved under the maps $\delta_{\Pi}$ in theorem \ref{5}. In fact, $\delta_{\Pi}$ mixes space and time coordinates and therefore does not preserve the form $\varphi({\bf x},t)\delta(t-\tau)$ of elements of $H_{\tau}$ in general. This is not surprising because standard quantum mechanics is non-relativistic. 
However, to provide a valid foundation of the non-relativistic quantum mechanics these subspaces must be preserved under Galileo transformations.
A Galileo transformation $G$ yields the map $\delta_{G}:H \longrightarrow H'$ defined by $\delta_{G}f=f \circ G^{-1}$ for all $f \in H$.  This map 
transforms the state $\varphi({\bf x}, t)\delta(t-\tau)$ into the state $\varphi(A{\bf x}+{\bf v}t+{\bf b}, t+c)\delta(t+c-\tau)$, where $A$ is an orthogonal transformation, ${\bf v}$ and $ {\bf b}$ are $3$-vectors, and $c$ is a real number. Recall now that $\varphi$ is an element of the Hilbert space ${\bf H}$ with metric given by the kernel $e^{-\frac{1}{2}({\bf x}-{\bf y})^{2}}$. This kernel is obviously invariant under Galileo transformations so that the function  $\varphi(A{\bf x}+{\bf v}t+{\bf b}, t+c)$ is still an element of ${\bf H}$. One concludes that Galileo transformations yield isomorphisms between subspaces $H_{\tau}$ (and that the map $G \longrightarrow \delta_{G}$, where $\delta_{G}$ is considered as acting on ${\bf H}$ is a unitary representation of the Galileo group).
\item
The equation $\frac{d\varphi_{\tau}}{d\tau}=\left(-\frac{\partial}{\partial t}-i{\widehat h}\right)\varphi_{\tau}$ with usual Hamiltonian is a well known non-relativistic limit of the Stueckelberg-Schr{\"o}dinger equation in the theory of Stueckelberg Ref.\cite{Stu1} and Horwitz \& Piron Ref.\cite{HorPir}. This theory treats space and time symmetrically and predicts interference in time Refs.\cite{Hor},\cite{Hor2}.
The non-relativistic limit of Stueckelberg theory was investigated by Horwitz and Rotbart Ref.\cite{HorRot}. The approximate equality of the time variable $t$ with the evolution parameter $\tau$ obtained in Ref.\cite{HorRot} is consistent with the definition of $H_{\tau}$. 
\item
Newton and Wigner Ref.\cite{NW} argue that delta functions $\delta^{4}_{a}$ cannot represent spatially localized states in a relativistic theory. However, their derivation is based on the condition of orthogonality of a localized state and its spatial displacement, which is not valid in the proposed framework. Note that the delta function locality is present in the relativistic Stueckelberg theory, which is off-shell. If
the  Stueckelberg expectation value of the dynamical variable $\widehat {x^{\mu}}$ (the operator of multiplication by the variable $x^{\mu}$, $\mu=0,1,2,3$) is decomposed into a
direct integral over mass, then for each definite mass in the
integral, the Newton-Wigner operator (having Newton-Wigner localized states as eigenstates) emerges.  Locality is restored in
the result of the integral Refs.\cite{HorPir},\cite{HorRot}.

The covariant property of the states $\delta^{4}_{a}$ and the operator $\widehat {x^{\mu}}$ does not mean by itself that the found objects are physical. There are well known difficulties: (1) the wave packet $\delta^{3}_{\bf a}$ contains negative energy components; (2) if such a packet is allowed to evolve by the usual relativistic equations it will evolve out of the light cone Ref.\cite{Heg}. Although these difficulties are typical for relativistic on-shell wave equations and were understood within the Stueckelberg approach  Ref.\cite{HorPir}, they must be reexamined in the new setting. 
\end{enumerate}

\section{Generalizing the framework to curved space-time manifolds}

So far the discussion involved only the classical $3$-dimensional Euclidean space and the Minkowski space-time. If the approach is taken seriously, it becomes essential to check its validity for more general space-times $N$. 
It is also important to see whether the Hilbert space into which $N$ is embedded can be defined without specifying the manifold first. For manifolds without additional (pseudo-) Riemannian structure the issues are resolved by the following theorem. 
\begin{thm} 
\label{0}
Given an arbitrary real $n$-dimensional manifold $N$ there exists a Hilbert space $H_{\R^{n}}$ of continuous functions on $\R^{n}$, such that the set $M_{n}$ of all delta functions in the dual space $H_{\R^{n}}^{\ast}$ is an embedded submanifold of $H_{\R^{n}}^{\ast}$ diffeomorphic to $N$. 
\end{thm}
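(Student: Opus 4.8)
The plan is to transport a well-behaved kernel from $N$ down to $\R^{n}$ by pulling it back along a merely continuous surjection, so that the smooth structure of the resulting delta-function set is inherited from $N$ rather than read off from $\R^{n}$. I assume $N$ is connected, which is genuinely needed: continuity of the kernel will force $M_{n}$ to be connected, and space-time manifolds are connected in any case. The first ingredient is a continuous surjection $p:\R^{n}\longrightarrow N$. Since $N$ is a connected, locally connected, second-countable Hausdorff manifold, it is metrizable and every compact connected piece of it is a Peano continuum; the Hahn--Mazurkiewicz theorem then supplies a continuous surjection from $[0,1]$ onto each such piece. Exhausting $N$ by an increasing sequence of compact connected sets and concatenating the resulting space-filling arcs through connecting paths yields a continuous surjection $\gamma:\R\longrightarrow N$, and composing with a coordinate projection $\pi:\R^{n}\longrightarrow \R$ gives $p=\gamma\circ\pi$. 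No smoothness of $p$ is required; its only roles are to be onto and to carry the kernel to $\R^{n}$.

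The second ingredient is a good kernel on $N$ itself. By the Whitney embedding theorem fix a proper smooth embedding $\iota:N\hookrightarrow \R^{2n+1}$ with closed image $\widetilde{N}=\iota(N)$, and restrict the Gaussian kernel $\widetilde{K}(u,v)=e^{-\frac{1}{2}|u-v|^{2}}$ to $\widetilde{N}$. Exactly as in theorems \ref{1} and \ref{1a}, now carried out on $\R^{2n+1}$ and restricted to the submanifold, the reproducing kernel Hilbert space $\widetilde{H}$ of $\widetilde{K}$ carries a feature map $\widetilde{\Phi}:N\longrightarrow \widetilde{H}$, $\widetilde{\Phi}(q)=\widetilde{K}(\,\cdot\,,q)$, which is a smooth embedding onto an embedded submanifold of $\widetilde{H}$ diffeomorphic to $N$. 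The facts I would record are $\|\widetilde{\Phi}(q)\|^{2}=\widetilde{K}(q,q)=1$ and $\|\widetilde{\Phi}(q)-\widetilde{\Phi}(q')\|^{2}=2-2e^{-\frac{1}{2}|\iota(q)-\iota(q')|^{2}}$, so $\widetilde{\Phi}$ is an injective immersion, and it is proper because $\langle\widetilde{\Phi}(q),\widetilde{\Phi}(q')\rangle\to 0$ as the points separate and $\widetilde{N}$ is closed; hence it is a homeomorphism onto its image.

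I then combine the two. Define on $\R^{n}$ the pulled-back kernel $K(x,y)=\widetilde{K}(p(x),p(y))$, which is continuous and positive-definite, since a pullback of a positive-definite kernel is positive-definite. Its reproducing kernel Hilbert space $H_{\R^{n}}$ therefore consists of continuous functions on $\R^{n}$, and $K(x,x)=1$ shows every evaluation functional $\delta_{x}$ is bounded, hence lies in $H_{\R^{n}}^{\ast}$. The standard pullback-kernel identity makes $p^{\ast}:\widetilde{H}\longrightarrow H_{\R^{n}}$, $f\mapsto f\circ p$, an isometric isomorphism: it is injective because $p$ is onto, isometric on kernel sections by the defining identity $\langle K(\cdot,x),K(\cdot,y)\rangle=\langle\widetilde{K}(\cdot,p(x)),\widetilde{K}(\cdot,p(y))\rangle$, and surjective because it carries a dense set onto a dense set. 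Since $\delta_{x}^{H}\circ p^{\ast}=\delta_{p(x)}^{\widetilde{H}}$, the adjoint isometry sends each delta functional $\delta_{x}^{H}$ to $\delta_{p(x)}^{\widetilde{H}}$; surjectivity of $p$ then carries the whole set $M_{n}=\{\delta_{x}^{H}:x\in\R^{n}\}$ bijectively onto $\{\delta_{q}^{\widetilde{H}}:q\in N\}$, which under the Riesz identification is precisely $\widetilde{\Phi}(N)$. As an isometric isomorphism restricts to a diffeomorphism on embedded submanifolds, $M_{n}$ is an embedded submanifold of $H_{\R^{n}}^{\ast}$ diffeomorphic to $N$.

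I expect the main obstacle to be this last structural claim rather than the algebra, namely verifying that $M_{n}$ is genuinely an \emph{embedded} submanifold, i.e. that the subspace topology it inherits from $H_{\R^{n}}^{\ast}$ coincides with the manifold topology transported from $N$. Through the isometric isomorphism this reduces entirely to $\widetilde{\Phi}$ being an immersion and a homeomorphism onto its image, which is exactly where the \emph{properness} of the Whitney embedding (closedness of $\widetilde{N}$) and the decay of $\widetilde{K}$ at large separation do the real work; the wildly non-injective, merely continuous surjection $p$ is harmless precisely because it is never used to define the smooth structure on $M_{n}$. A secondary point to flag is that continuity of $K$ forces $M_{n}=\widetilde{\Phi}(N)$ to be connected, so the argument as given applies to connected $N$; disconnected manifolds fall outside the reach of a single continuous pullback and would require a separate device.
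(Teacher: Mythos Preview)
The paper does not actually prove this theorem: after the statement it offers only the $S^{1}$ example (periodic Sobolev functions on $\R$) as an illustration of ``how to find the Hilbert space $H_{\R^{n}}$,'' and then moves on. Your proposal is therefore not a reworking of the paper's argument but a genuine proof where the paper gives none.

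Your route is also genuinely different in spirit from the hint the paper provides. The $S^{1}$ sketch builds the identification $\delta_{a}=\delta_{a+2\pi}$ into the function space by hand, via periodicity, i.e.\ by exploiting that $S^{1}$ is a quotient of $\R$ by a nice group action; this idea does not obviously scale to arbitrary manifolds. You instead fix a Whitney embedding $\iota:N\hookrightarrow\R^{2n+1}$, restrict the Gaussian kernel to $\widetilde N=\iota(N)$ to obtain an RKHS $\widetilde H$ in which $\widetilde\Phi(N)$ is an embedded copy of $N$, and then pull the whole structure down to $\R^{n}$ along a merely continuous surjection $p:\R^{n}\to N$. The key observation---that for a surjective $p$ the pullback $p^{\ast}:\widetilde H\to H_{\R^{n}}$, $f\mapsto f\circ p$, is an isometric isomorphism of RKHS's, and hence carries the delta-set bijectively and isometrically onto $\widetilde\Phi(N)$---is correct and does all the work. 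The verification that $\widetilde\Phi$ is an injective immersion and a homeomorphism onto its image (via the norm identity $\|\widetilde\Phi(q)-\widetilde\Phi(q')\|^{2}=2-2e^{-\frac{1}{2}|\iota(q)-\iota(q')|^{2}}$ and properness of $\iota$) is clean and matches the mechanism behind theorems~\ref{1a} and~\ref{3}. What your approach buys is a uniform construction that works for any connected $N$ without case analysis on its topology; what the paper's quotient-style example buys, when it applies, is a more explicit and recognizable function space.

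Your self-identified restriction to connected $N$ is real and intrinsic to the device you chose: a continuous kernel on $\R^{n}$ makes $x\mapsto\delta_{x}$ norm-continuous, so $M_{n}$ is a continuous image of $\R^{n}$ and hence connected. This is a limitation of the method rather than an error, and you flag it correctly; for the intended application to space-time manifolds it is harmless.
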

In essence, the theorem claims that an arbitrary $n$-dimensional manifold can be ``encoded'' into an appropriate Hilbert space of functions on $\R^{n}$. To get an idea of how to find the Hilbert space $H_{\R^{n}}$, especially when the topology of the manifold is not trivial, consider the case of a circle $S^{1}$. In this case the space $H_{\R}$ must be a Hilbert space of continuous functions on $\R$. To ensure that the image $M_{1}$ of the map $\omega: \R \longrightarrow H_{\R}^{\ast}$, $\omega(a)=\delta_{a}$ is a circle, one needs $\delta_{a}=\delta_{a+2\pi}$ for all $a \in \R$, which means that functions in $H_{\R}$ must be $2\pi$-periodic. To satisfy these conditions, consider the Sobolev space of continuous $2\pi$-periodic functions on $\R$ with the inner product $(f,g)=\int^{\pi}_{-\pi} \left(f(x)\overline{g}(x)+f'(x)\overline{g'}(x)\right)dx$. It is easy to check that the set of all delta functions in the dual space $H_{\R}^{\ast}$ with the induced topology is homeomorphic to the circle $S^{1}$.

A particular manifold in the theorem is encoded by fixing the {\em functional content} of the Hilbert space rather than fixing the domain of the functions. To put it differently, the manifold $M_{n}$ is ``made of'' functions and not points in the domain of the functions. 

The problem of isometric embeddings of Riemannian and pseudo-Riemannian manifolds is now handled by the following theorem.
\begin{thm}
\label{5new}
Let $N$ be a Riemannian or pseudo-Riemannian smooth manifold of dimension $n$. For any point $x\in N$ there is a neighborhood $W$ of $x$ in $N$ and a Hilbert or Krein space $H$ that contains delta functions (evaluation functionals) $\delta^{(n)}_{a}$ for all $a$ in an open set $U$ in $\R^{n}$ such that the set $M_{n}$ of all these delta functions is an embedded submanifold of $H$ isometric to $W$.
\end{thm}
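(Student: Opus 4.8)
\emph{Proof strategy.} The plan is to reduce the curved problem to the flat pseudo-Euclidean model already treated in theorems \ref{1}--\ref{3} by means of a \emph{local} isometric embedding, and then to transport the Gaussian kernel construction back onto a chart around $x$. First I would invoke a local isometric embedding theorem for (pseudo-)Riemannian manifolds: there is a neighborhood $W$ of $x$, a chart $\gamma:W\longrightarrow U\subset\R^{n}$, and a smooth map $\Phi=(\Phi^{1},\dots,\Phi^{N}):U\longrightarrow\R^{p,q}$ (with $p+q=N$ and the signature of $\eta$ at least that of $g$) such that $\Phi^{\ast}\eta=g$, i.e. $\eta_{\alpha\beta}\,\partial_{i}\Phi^{\alpha}\partial_{j}\Phi^{\beta}=g_{ij}$ on $U$. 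For real-analytic metrics this is the Cartan--Janet--Burstin theorem together with its pseudo-Riemannian analogues; in the smooth category one takes the codimension $N-n$ large. This step carries $W$ into the flat setting where the companion of theorem \ref{2} applies.

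Second, on $U$ I would define the Hermitian kernel
\begin{equation}
\label{curvedkernel}
K(a,b)=\exp\!\left(-\tfrac{1}{2}\,\eta_{\alpha\beta}\bigl(\Phi^{\alpha}(a)-\Phi^{\alpha}(b)\bigr)\bigl(\Phi^{\beta}(a)-\Phi^{\beta}(b)\bigr)\right),
\end{equation}
the evident generalization of the kernels in (\ref{inner}) and (\ref{innerM}), and let $H$ be the completion of a suitable space of functions on $U$ in the form determined by $K$, so that $(\delta^{(n)}_{a},\delta^{(n)}_{b})_{H}=K(a,b)$. The induced metric on $M_{n}=\{\delta^{(n)}_{a}\}$ is then computed exactly as for $M_{3}$ and $M_{4}$: since $K(a,a)=1$ and $\partial_{a^{i}}K|_{a=b}=0$, the cross term drops and one finds $(\partial_{a^{i}}\delta^{(n)}_{a},\partial_{a^{j}}\delta^{(n)}_{a})_{H}=\partial_{a^{i}}\partial_{b^{j}}K(a,b)|_{b=a}=\eta_{\alpha\beta}\,\partial_{i}\Phi^{\alpha}\partial_{j}\Phi^{\beta}=g_{ij}(a)$. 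Thus $\omega:a\longmapsto\delta^{(n)}_{a}$ is an isometric immersion; injectivity (delta functions at distinct points are distinct functionals on a space separating points) and non-degeneracy of $g$ make it an embedding of a sufficiently small $W$, with the positive-definite companion kernel (all signs $+$) inducing, as in theorem \ref{3}, a Riemannian $\widetilde H$-type metric on $M_{n}$.

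Third, I would verify that (\ref{curvedkernel}) genuinely defines a Hilbert space when $g$ is Riemannian and a Krein space when $g$ is indefinite, and that the evaluation functionals are bounded. When $q=0$ the kernel is a positive-definite Gaussian and the argument of theorem \ref{1} applies verbatim. In the indefinite case I would factor $K=K_{+}K_{-}$ with $K_{+}=\exp(-\tfrac12\sum_{\alpha\le p}(\Delta\Phi^{\alpha})^{2})$ positive definite and $K_{-}=\exp(+\tfrac12\sum_{\alpha>p}(\Delta\Phi^{\alpha})^{2})$ carrying the negative directions, and reproduce the device of theorem \ref{2}: restrict to functions of the form $e^{-\sum_{\alpha>p}(\Phi^{\alpha})^{2}}\varphi$ so that the growing factor in $K_{-}$ is tamed, and split off the positive and negative subspaces by the parity of $\varphi$ in the $q$ ``time-like'' combinations $\Phi^{p+1},\dots,\Phi^{p+q}$. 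The resulting orthogonal splitting exhibits $H$ as a Krein space containing the $\delta^{(n)}_{a}$, completing the identification of $W$ with the isometric submanifold $M_{n}$.

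The step I expect to be the main obstacle is this last one: realizing (\ref{curvedkernel}) as a bona fide Krein space with bounded, correctly-signed evaluation functionals. The difficulty is that, unlike in theorem \ref{2}, the negative-signature directions $\Phi^{p+1},\dots,\Phi^{p+q}$ are curvilinear functions of the chart coordinates rather than coordinates themselves, so the clean even/odd-in-$t$ decomposition is replaced by a parity decomposition along the curved level sets $\Phi^{\alpha}=\mathrm{const}$, and one must check that the weight $e^{-\sum_{\alpha>p}(\Phi^{\alpha})^{2}}$ both keeps the $H_{\eta}$-form finite on a dense domain and leaves each $\delta^{(n)}_{a}$ with finite norm of the correct sign. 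A secondary, more benign subtlety is the availability of the local isometric embedding $\Phi$ in the smooth (as opposed to real-analytic) pseudo-Riemannian category, which fixes the admissible signature $(p,q)$ and the codimension $N-n$.
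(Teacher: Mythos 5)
Your first two steps coincide with the paper's proof in all essentials. The paper likewise begins by invoking an isometric embedding $X:U\longrightarrow \R^{p}$ of (a chart on) a neighborhood $W$ of $x$ into a flat pseudo-Euclidean space (it cites the global Nash-type theorem where you cite local Cartan--Janet-type results, but only the local statement is used, so this difference is immaterial), and the inner product it ultimately obtains on functions of the chart variable $u$, Eq.~(\ref{newinner}), is exactly your pulled-back Gaussian kernel $K(a,b)=k(X(a),X(b))$ with $k$ as in Eq.~(\ref{innerMM}). Your induced-metric computation --- $\partial_{a^{i}}\partial_{b^{j}}K(a,b)\big|_{b=a}=\eta_{\alpha\beta}\,\partial_{i}\Phi^{\alpha}\partial_{j}\Phi^{\beta}=g_{ij}$, with no second-derivative terms of $\Phi$ because the two derivatives act on different arguments --- is precisely the paper's chain-rule calculation culminating in Eq.~(\ref{final}). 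Up to this point you have reproduced the paper's argument.

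The genuine gap is your third step, and you correctly identified it yourself. Your proposal to transplant the even/odd device of theorem \ref{2} --- weight by $e^{-\sum_{\alpha>p}(\Phi^{\alpha})^{2}}$ and decompose by parity in the timelike combinations $\Phi^{p+1},\dots,\Phi^{p+q}$ --- does not survive the passage to curved $\Phi$: parity is the ambient reflection $X^{\alpha}\mapsto -X^{\alpha}$ for $\alpha>p$, and this reflection in general does not map the embedded surface $X(U)$ into itself, so it induces no operator at all on functions on $U$; there is consequently no intrinsic even/odd splitting of the completion of $K$, and no direct route to finiteness and correct signs of $\|\delta^{(n)}_{a}\|^{2}$. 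The paper sidesteps this entirely by never constructing the Krein structure intrinsically. It first builds the ambient Krein space $H_{p}$ of functions on $\R^{p}$ --- where the parity decomposition is performed in the genuinely flat coordinates $X^{k+1},\dots,X^{p}$, verbatim as in theorem \ref{2} --- and then, after permuting coordinates and shrinking $W$ so that $u^{1},\dots,u^{n}$ are the first $n$ ambient coordinates, defines $H$ as (an isomorphic copy of) the closed subspace $H_{n}\subset H_{p}$ of functionals $\varphi(X)\,\delta^{(p-n)}(X-X(u))$ concentrated on the surface. The transverse delta factors annihilate the transverse integrations, the inherited inner product collapses exactly to your kernel $K$, and the evaluation functionals $\delta^{(n)}_{a}$ are simply the ambient $\delta^{(p)}_{X(a)}$, so their membership in $H$ and the Hilbert/Krein structure are inherited from $H_{p}$ rather than established by a curvilinear parity analysis. (The paper's own justification that the closed subspace $H_{n}$ carries the inherited structure is itself stated tersely, but the burden there is far lighter than in your intrinsic construction.) The missing idea, in short, is ``restrict the ambient Krein space by transverse delta functions'' in place of ``transplant the even/odd trick to the curved level sets''; with that substitution your outline becomes the paper's proof.
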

\begin{proof}
It is known that an arbitrary smooth Riemannian or pseudo-Riemannian manifold $N$ of dimension $n$ admits an isometric embedding into the Euclidean or pseudo-Euclidean space $\R^{p}$ of a sufficiently large dimension $p\ge n$, $p=k+l$, where $(k,l)$ is the signature of the metric on $\R^{p}$.  Also, by an obvious generalization of theorems \ref{2} and \ref{3} the map $\Omega:\R^{p} \longrightarrow H_{p}$, $\Omega(A)=\delta^{(p)}_{A}$ is an isometric embedding of the space $\R^{p}$ into the Krein space $H_{p}$ defined via the inner product 
\begin{equation}
\label{innerMM}
(f,g)_{H_{p}}=\int e^{-\frac{1}{2}\left((X^{1}-Y^{1})^{2}+...+(X^{k}-Y^{k})^{2}\right)+\frac{1}{2}\left((X^{k+1}-Y^{k+1})^{2}+...+(X^{p}-Y^{p})^{2}\right)}f(X^{1},...,X^{p}){\overline g(Y^{1},...,Y^{p})} d^{p}X d^{p}Y
\end{equation}
with $d^{p}X=dX^{1}\cdots dX^{p}$, $d^{p}Y=dY^{1}\cdots dY^{p}$. Note that the analogues of $H_{ev}$, $H_{odd}$ in theorem \ref{2} are obtained here by representing an arbitrary function $f(X^{1},...,X^{k},X^{k+1},...,X^{p})$ as the sum of ``even''
\begin{equation}
\frac{1}{2}\left(f(X^{1},...,X^{k},X^{k+1},...,X^{p})+f(X^{1},...,X^{k},-X^{k+1},...,-X^{p})\right) 
\end{equation}
and ``odd''
\begin{equation}
\frac{1}{2}\left(f(X^{1},...,X^{k},X^{k+1},...,X^{p})-f(X^{1},...,X^{k},-X^{k+1},...,-X^{p})\right)
\end{equation}
components. Otherwise the proof mimics the one given in Ref.\cite{KryukovJMP}.

Let's form a Hilbert (Krein) subspace $H_{n}$ of $H_{p}$ in the following fashion. Let $x\in N$ be a point and let $X^{q}(u)$, $q=1,...,p$, $u \in U$, and $U \subset \R^{n}$ be functions describing the isometric embedding of a neighborhood $W \subset N$ of $x$ into $\R^{p}$. By permuting indices of the coordinates $X^{1}, ..., X^{p}$ and considering a smaller neighborhood $W$ if necessary one can always ensure that $u^{1},...,u^{n}$ are just the first $n$ of the coordinates. So, consider the set $H_{n}$ of all function(al)s in $H_{p}$ that have the form $\varphi(X^{1},...,X^{p})\delta(X^{n+1}-X^{n+1}(u))\cdots \delta(X^{p}-X^{p}(u))$, or more briefly $\varphi(X)\delta^{(p-n)}(X-X(u))$ with $u \in U$. Denoting the kernel of the metric in $H_{p}$, given by Eq.(\ref{innerMM}), by $k(X,Y)$, we have for the inner product of two such functionals:
\begin{equation}
\int k(X,Y) \varphi(X)\delta^{(p-n)}(X-X(u)){\overline \psi(Y)}\delta^{(p-n)}(Y-Y(v))d^{p}Xd^{p}Y,
\end{equation}
where $Y^{1}=v^{1},...,Y^{n}=v^{n}$. The delta functions remove integration with respect to $X^{n+1},..., X^{p}$ and $Y^{n+1},..., Y^{p}$, which gives
\begin{equation}
\label{newinner}
\int k(X(u),Y(v)) \varphi(X(u)){\overline \psi(Y(v))}d^{n}ud^{n}v,
\end{equation}
where $du=du^{1}\cdots du^{n}=dX^{1}\cdots dX^{n}$ and similarly for $dv$. The set $H_{n}$ is a closed subspace in $H_{p}$ so it is a Hilbert space. Expression (\ref{newinner}) shows that $H_{n}$ is isomorphic to the Hilbert space $H$ of all functions $\chi(u)=\varphi(X(u))$, $u \in U$ for which $\varphi(X)\delta^{(p-n)}(X-X(u))$ is in $H_{p}$, furnished with the inner product $(\chi, \rho)_{H}=\int k(X(u),Y(v)) \chi(u)\rho(v)d^{n}ud^{n}v$. 

Obviously, the functionals $\varphi(u)=\delta^{(n)}(u-a)$, $a \in U$ are in $H$. It remains to show that the metric induced on the set $M_{n}$ of all such functionals in $H$ is the given (pseudo) Riemannian metric on $N$. For this consider a curve $u^{\mu}=a^{\mu}(\tau)$ in $U$ and the corresponding curve $\varphi_{\tau}(u)=\delta^{(n)}(u-a(\tau))$ in $M_{n}$. For the squared $H$-norm of the velocity vector $d \delta^{(n)}(u-a(\tau))/d\tau$ we have
\begin{equation}
\label{10}
\int k(X(u),Y(v)) \frac{d \delta^{(n)}(u-a(\tau))}{d\tau}  \frac{d \delta^{(n)}(v-a(\tau))}{d\tau} d^{n}ud^{n}v.
\end{equation}
Simplifying this by the chain rule
\begin{equation}
\frac{d \delta^{(n)}(u-a(\tau))}{d\tau}=-\frac{\partial \delta^{(n)}(u-a(\tau))}{\partial u^{\mu}}\frac{d a^{\mu}(\tau)}{d\tau} 
\end{equation}
followed by integration by parts (see Ref.\cite{Kryukov3} for justification), one obtains the expression
\begin{equation}
\left.\frac{\partial^{2}k(X(u), Y(v))}{\partial u^{\mu} \partial v^{\nu}}\right|_{u=v=a(\tau)}\frac{d a^{\mu}(\tau)}{d\tau}\frac{d a^{\nu}(\tau)}{d\tau}.
\end{equation}
But
\begin{equation}
\frac{\partial^{2}k(X(u), Y(v))}{\partial u^{\mu} \partial v^{\nu}}=\frac{\partial^{2}k(X(u), Y(v))}{\partial X^{r} \partial Y^{s}}\frac{\partial X^{r}}{\partial u^{\mu}}\frac{\partial Y^{s}}{\partial v^{\nu}},
\end{equation}
and for the kernel $k(X,Y)$ given by Eq.(\ref{innerMM}) one also has
\begin{equation}
\left.\frac{\partial^{2}k(X(u), Y(v))}{\partial X^{r} \partial Y^{s}}\right|_{u=v}=\eta_{rs},
\end{equation}
where $\eta_{rs}$ are components of the indefinite (Minkowski-like) metric of signature $(k,l)$ on $\R^{p}$. So the squared norm of the velocity vector in Eq.(\ref{10}) is equal to
\begin{equation}
g_{\mu \nu}\frac{d a^{\mu}}{d\tau}\frac{d a^{\nu}}{d\tau}, 
\end{equation}
where 
\begin{equation}
\label{final}
g_{\mu \nu}=\eta_{rs} \left.\frac{\partial X^{r}}{\partial u^{\mu}}\frac{\partial Y^{s}}{\partial v^{\nu}}\right|_{u=v}
\end{equation}
are the components of the induced metric on $M_{n}$.

Recall now that the functions $X^{r}(u)$ describe the isometric embedding of $W \subset N$ into $\R^{p}$. In other words, components of the (pseudo-) Riemannian metric on $W$ are given by
\begin{equation}
{\widetilde g}_{\mu \nu}=\eta_{rs} \frac{\partial X^{r}}{\partial u^{\mu}}\frac{\partial X^{s}}{\partial u^{\nu}}.
\end{equation}
Since this expression coincides with Eq.(\ref{final}), the obtained embedding of $W$ into $H$ is isometric. This completes the proof.
\end{proof}
Several useful observations must be made.
\begin{enumerate}
\item
Theorem \ref{5new} makes it possible to extend the results of theorem \ref{3} to neighborhoods in arbitrary pseudo-Riemannian space-times. In this case the Poincar{\'e} group acting on Minkowski space-time is replaced by the group of diffeomorphisms of a particular neighborhood. This yields the following analogue of diagram (\ref{diagram})
\begin{equation}
\label{diagramB}
\begin{CD}
{\bf S}   @ >\Gamma>> H @ >\delta_{D}>> H'\\
@ AA\Omega A           @ AA \omega A      @ AA\omega A \\
W         @ >\gamma>> U @ >D>> U
\end{CD}
\end{equation}
Here $W$ is a neighborhood in curved space-time as defined in theorem \ref{5new}, $\gamma$ is a chart on $W$ and $U$ is the corresponding set in $\R^{4}$, $D$ is an arbitrary diffeomorphism of $U$ and $\delta_{D}$ is its extension to the space $H$ constructed in theorem \ref{5new}. As already mentioned in the remarks following theorem \ref{5}, the existence and uniqueness of the isomorphism $\delta_{D}$ and the space $H'$ can be proved as before. 
\item
Recall that the set $M_{4}$ is invariant under transformations $\delta_{\Pi}$, making it possible to ``separate'' special relativity from the Hilbert space framework. In the discussion that followed theorem \ref{7} it was verified that  the ``Galileo maps'' $\delta_{G}$ map subspaces $H_{\tau}$ onto themselves. This explains why the non-relativistic quantum mechanics could also be developed within a single Hilbert space of functions of three variables. 
Diagrams (\ref{diagram}), (\ref{diagramA}) provide us with a ``covariant'' extension of special relativity. Likewise, diagram (\ref{diagramB}) together with its active version yield a local geometric extension of general relativity. Those extensions are based on isomorphisms of separable Hilbert spaces. If such a scheme is adopted in physics, that would mean that specific functional realizations of the abstract Hilbert space ${\bf S}$, at least within the considered class of realizations, are not physical but rather are similar to various choices of coordinates on space-time. One may disregard this point by saying that the considered isomorphisms of Hilbert spaces of functions are direct analogues of well known changes in representation in quantum theory. However, unlike changes of representation that are simply passive changes in the description of physical reality, the transformations considered here can be realized actively. 
Active transformations are capable of creating a new physical reality. For  instance, rotation of a massive body can change the gravitational field created by it, while rotation of the coordinate system cannot. Inclusion of active transformations signifies then that the construction is not just formally mathematical, but is capable of affecting physics as well.
\item
If the discussed embedding of the classical space $\R^{3}$ into ${\bf H}$ as well as the embeddings of Minkowski space-time and local embeddings of arbitrary curved space-times into the corresponding Hilbert spaces  are taken seriously, then the linearity of quantum theory appears in a completely new light. In fact, the geometry of the abstract Hilbert space ${\bf S}$ and its realizations like $H$ is linear. It is the non-linearity of submanifolds $M_{3}$ and $M_{4}$ that seems to be responsible for the non-linear way in which classical world appears to us. 
By replacing the restricted, ``space-time based'' view of the world with its extension to the space ${\bf S}$ one can perhaps obtain a tool for reconciliation of  quantum theory and relativity.
\end{enumerate}

\section{Acknowledgments}

I am indebted to Larry Horwitz for his critical review of the results and an anonymous reviewer for useful recommendations and support. I would like to thank Malcolm Forster for numerous discussions and Kent Kromarek for help in improving the exposition. Part of this work was done at UW-Madison Department of Philosophy. I would like to thank the faculty of that department for their hospitality. This work was supported by the NSF grant SES-0849418.


\end{document}